\title{ Topology and Non-Deterministic Polynomial Time Computation : Avoidance of The Misbehaviour of Hub-Free Diagrams and Consequences}
\author{Anthony Gasperin
\institute{Theoretical Computer Science Department,University of Geneva, Switzerland }
\email{\quad anthony.gasperin@unige.ch}
}
\begin{document}
\maketitle

\newtheorem{Def}{Definition}
\newtheorem{Lem}{Lemma}
\newtheorem{Th}{Theorem}
\newtheorem{Conj}{Conjecture}
\newtheorem{Prop}{Proposition}
\newtheorem{Cor}{Corollary}
\newtheorem{Claim}{Claim}
\newtheorem{Rem}{Remark}

\begin{abstract}
To study groups with small Dehn's function, Olshanskii and Sapir developed a new invariant of bipartite chords diagrams and applied it to 
hub-free realization of $\mathcal{S}$-machines.
In this paper we consider this new invariant together with groups constructed from $\mathcal{S}$-machines containing the hub relation.
The idea is to study the links between the topology of the asymptotic cones and polynomial time computations.
Indeed it is known that the topology of such metric space depends on diagrams without hubs that do not correspond to the computations of the
considered $\mathcal{S}$-machine. This work gives sufficient conditions that avoid this misbehaviour,
but as we shall see the method has a significant drawback.

\end{abstract}
\section{Introduction}
Let $(X,d_X)$ a metric space $s=(s_n)$ a sequence of points in $X$, $d=(d_n)$ an increasing sequences of numbers with 
$\text{lim }d_n=\infty$ and let $\omega: P(\mathbb{N}) \to \{0,1\}$ be a non-principal ultrafilter. An \textit{asymptotic cone} of
$Con_\omega(X,s,d)$ of $(X,d_X)$ is the subset of the cartesian product $X^\mathbb{N}$ consisting of sequences $(x_i)_{i \in \mathbb{N}}$
with $\text{lim}_\omega \frac{d_X( s_i,x_i)}{d_i} < \infty$ where two sequences $(x_i)$ and $(y_i)$ are equivalent if and only if
$\text{lim}_\omega \frac{d_x(x_i,y_i)}{d_i}=0$. The distance between two elements $(x_i),(y_i)$ in the asymptotic cone 
$Con_\omega(X,s,d)$ is defined as $\text{lim}_\omega \frac{d_X(x_i,y_i)}{d_i}$. Here $\text{lim}_\omega$ is defined as follows. If $a_n$
is a bounded sequence of real numbers then $\text{lim}_\omega(a_n)$ is the unique number $a$ such that for every $\epsilon >0$,
$\omega(\ \{ n\ |\ |a_n-a| < \epsilon \}\ )=1$.
The \emph{asymptotic cones} of a finitely generated group $G$ are asymptotic cones of the Cayley graph of $G$ and it is well known that 
they do not depend on the choice of the sequence $s$. It is then assumed that $s=(1)$ where $1$ is the identity. 
Given an ultrafilter $\omega$ and an increasing sequence of numbers $d$ the asymptotic cone of a finitely generated group $G$ is then denoted $Con_\omega(X,d)$.

A function $f:\mathbb{N} \to \mathbb{N}$ is an isoperimetric function of a finite presentation $\langle X,R \rangle$ of a group $G$ if every word $w$ in $X$,
that is equal to $1$ in $G$, is freely equal to a product of conjugates $\prod^{m}_{i=1} x_i^{-1} r_i x_i$ where $r_i$ or $r^{-1}_i$ is in $R$, $x_i$ is
in $(X \cup X^{-1})^*$ and $m \leq f(|w|)$.
The \emph{Dehn's function} of a finite presentation $\langle X,R \rangle$ is defined as the smallest \emph{isoperimetric} function of the presentation.\\
In \cite{Bir02,Gro93,DBLP:journals/ijac/OiSHANSKIIS01} the connections between Dehn's functions, asymptotic geometry of groups and computational complexity of the word problem are discussed.
In \cite{Gro93} Gromov showed that if all \emph{asymptotic cones} of a group $G$ are simply connected then $G$ is finitely presented, has polynomial isoperimetric
function and linear isodiametric function. Papasoglu \cite{Pap96} proved that if a finitely presented group has quadratic isoperimetric function then all its
asymptotic cones are simply connected. 
Sapir, Birget and Rips in \cite{Sap02} introduced the concept of $\mathcal{S}$-machines to show that the word problem of a finitely generated group is decidable
in polynomial time if and only if this group can be embedded into a group with polynomial isoperimetric function.
Olshanskii and Sapir in \cite{Sap05} constructed a group with polynomial isoperimetric function, linear 
isodiametric function and non-simply connected asymptotic cones, the group is roughly an $\mathcal{S}$-machine introduced in \cite{Sap02}. In \cite{DBLP:journals/ijac/OlshanskiiS07} 
they also constructed a group with two non-homeomorphic asymptotic cones  using the concept of $\mathcal{S}$-machine.
Considering hub-free realization of $\mathcal{S}$-machines \cite{Ols06} developed a new invariant of bipartite chords diagrams. In particular this invariant
is used to prove that there exists a finitely presented multiple HNN extension of free groups with Dehn's function $n^2 log\ n$ and undecidable conjugacy problem.
Recently \cite{DBLP:conf/cie/Gasperin13}, using results and suggestions of \cite{Sap05}, studied the links between polynomial time computations and topology.
The work focuses on a direct construction of an $\mathcal{S}$-machine from an $NP$-complete problem. \cite{DBLP:conf/cie/Gasperin13} emphasizes that if one wants to study 
the topology of the asymptotic cones of some non-deterministic computation then one has to be careful when constructing the $\mathcal{S}$-machine.
Indeed the topology highly depends on the construction of the considered $\mathcal{S}$-machine and diagrams that do not correspond to the computation of the machine. 
In particular, the consideration of hub-free diagrams and some explicit rules, allows to prove that all asymptotic cones of the groups are not simply connected.

In this paper we consider groups with the hub relation together with the new invariant of bipartite chords diagram. 
We show that in a such setting it is possible to construct a group such that the hub-free diagrams satisfy a Lemma stated in \cite{DBLP:journals/ijac/OlshanskiiS07}. 
Then it comes that such diagrams cannot be used anymore to contradict the simply connectivity of the asymptotic cones. 
As a major drawback the area of diagrams containing the hub relation and corresponding to computation is increased exponentially.

\section{Preliminaries}  
This section introduces briefly the machinery introduced by Sapir, Birget and Rips in \cite{Sap02}.
We need to explain, at least superficially, what is an $S$-machine, how it works and especially how it leads to the construction of group.

\subsection{$S$-machines}

This section is closely modeled on \cite{Sap02}, we recall the notion of $S$-machine defined in the work of Sapir, Birget and Rips in \cite{Sap02}.

\cite{Sap02} defined $S$-machines as rewriting systems. An $S$-machine then comes with a \textit{hardware}, a \textit{language of admissible words}, and
a set of \textit{rewriting rules}.
A \textit{hardware } of an $S$-machine is a pair $(Y,Q)$ where $Y$ is an $n$-vector of (not necessarily disjoint) sets $Y_i$, $Q$ is an $(n+1)$-vector of 
disjoints sets $Q_i$ with $(\bigcup Y_i) \cap (\bigcup Q_i) = \emptyset$.
The elements of $\bigcup Y_i$ are called \textit{tape letters}; the elements of $\bigcup Q_i$ are called \textit{state letters}.
With every hardware $\mathcal{S}=(Y,Q)$ one can associate the \textit{language of admissible words} $L(\mathcal{S})=Q_1 F(Y_1) Q_2 \cdots F(Y_n)Q_{n+1}$ where
$F(Y_i)$ is the language of all reduced group words in the alphabet $Y_j \cup Y^{-1}_j$.
This language completely determines the hardware. One can then describe the language of admissible words instead of describing the hardware $\mathcal{S}$.
If $1 \leq i < j \leq n$ and $W=q_1 u_1 q_2 \cdots u_n q_{n+1}$ is an admissible word, $q_i \in Q_i, u_i \in (Y_i \cup Y^{-1}_i)^*$ then the 
subword $q_i u_i \cdots q_j$ of $W$ is called the $(Q_i,Q_j)$-subword of $W$ ($i < j)$.
The rewriting rules ( $S$-\textit{rules}) have the following form:
\begin{center}
	$[U_1 \to V_1, \dots, U_m \to V_m ]$
\end{center}
where the following conditions hold:
Each $U_i$ is a subword of an admissible word starting with a $Q_l$-letter and ending with $Q_r$-letter.
If $i < j$ then $r(i) < l(j)$, where $r(i)$ is the end of $U_i$ and $l(j)$ the start of $U_j$.
Each $V_i$ is a subword of and admissible word whose $Q$-letters belong to $Q_{l(i)} \cup \cdots \cup Q_{r(i)}$.
The machine applies an $S$-rule to a word $W$ replacing simultaneously subwords $U_i$ by subword $V_i,i=1,\dots,m$.

As mentioned in \cite{Sap02} there exists a natural way to convert a Turing machine $M$ into an $S$-machine $\mathcal{S}$; 
one can concatenate all tapes of the given machine $M$ together and replace every command $aq\omega \to q' \omega$ by $a^{-1}q'\omega$. 
Unfortunately the $S$-machine constructed following this
natural way will not inherit most of the properties of the original machine $M$.
According to \cite{Sap02} the main problem is that it is nontrivial to construct an $S$-machine which recognizes only positive powers 
of a letter.
Thus in order to construct an $S$-machine $\mathcal{S}(M)$ that will inherit the desired properties of a Turing machine $M$, Sapir, Birget and Rips 
in \cite{Sap02} constructed eleven $S$-machines and then used them to construct the final $S$-machine $\mathcal{S}(M)$ simulating $M$. 
The construction is quite involved and nontrivial, one can see \cite{Sap02} for details.

Taking any Turing machine $M$ and modifying it in a specific way, \cite{Sap02}
constructs an $S$-machine $\mathcal{S}(M)$ simulating $M$. The $S$-machine constructed in \cite{Sap02} is quite long to define, next we explain 
briefly the main part of the construction, for proofs and deeper understanding of the whole machinery please refer to \cite{Sap02}. 
The main idea of the construction is to simulate the initial machine $M$ using eleven $S$-machines $S_1,S_2, \dots,S_9,S_{\alpha},S_{\omega}$.
We will explain how the machines  $S_4,S_9,S_{\alpha},S_{\omega}$ are used in the construction of $\mathcal{S}(M)$. The others $S$-machines are used to construct 
$S_4$ and $S_9$ and are rather of technical importance.
First we need to describe what is an admissible word of the $S$-machine $\mathcal{S}(M)$.
For every $q \in Q$ the word $q\omega$ is denoted by $F_q$, in every command of $M$ the word $q\omega$ is replaced by $F_q$.
Left marker on tape $i$ is denoted by $E_i$. This gives a Turing Machine $M'$ such that the configurations of each tape have the form 
$E_i u F_q$ where $u$ is a word in the alphabet of tape $i$ and every command or its inverse has one of the forms:
\begin{equation}\label{positive_rules}
	\{ F_{q_1} \to F_{q'_1}, \dots, aF_{q_i} \to F_{q'_i}, \dots,F_{q_k}\to F_{q'_k} \}
\end{equation}
where $a \in Y$ or 

\begin{equation}
	\{ F_{q_1} \to F_{q'_1}, \dots, E_i F_{q_i} \to E_i F_{q'_i} ,\dots , F_{q_k} \to F_{q'_k} \}.
\end{equation}

An admissible word of the considered $\mathcal{S}(M)$ machine is a product of three parts. 
The first part has the form 
\begin{center}
	$E(0) \alpha^{n_1} x(0) \alpha^{n_2} F(0)$.
\end{center}
The second part is a product of $k$ words of the form
\begin{center}
	$E(i)v_ix(i)w_iF(i)E'(i) p(i) \Delta_{i,1} q(i) \Delta_{i,2}r(i) \Delta_{i,3} s(i) \Delta_{i,4} t(i) \Delta_{i,5}$\\ 
	$u(i) \Delta_{i,6} \overline{p}(i) \Delta_{i,7} \overline{q}(i) \Delta_{i,8} \overline{r}(i) \Delta_{i,9}$\\
	$\overline{s}(i) \Delta_{i,10} \overline{t}(i)\Delta_{i,11} \overline{u}(i) \Delta_{i,12} F'(i), i =1,\dots,k$
\end{center}
The third part has the form 
\begin{center}
	$E'(k+1) \omega^{n'_1}x'(k+1)\omega^{n'_2}F'(k+1)$.
\end{center}
Here $v_i,w_i$ are group words in the alphabet $Y_i$ of tape $i$, and $\Delta_{i,j}$ is a power of $\delta$. The letters 
\begin{center}
$E(i),x(i),F(i),E'(i),p(i),q(i),r(i),s(i),t(i),u(i),\overline{p}(i),\overline{q}(i),\overline{r}(i)$,
$\overline{s}(i),\overline{t}(i),\overline{u}(i),F'(i)$ 
\end{center}
belong to disjoint sets of state letters.
The letters $x(i),p(i),q(i),r(i),s(i),t(i),u(i),\overline{p}(i),\overline{q}(i),\overline{r}(i)$,
$\overline{s}(i), \overline{t}(i), \overline{u}(i)$
are called standard and are included into the corresponding sets 
$\bold{X}(i),\bold{P}(i),\bold{R}(i),\bold{S}(i),\bold{T}(i),\bold{U}(i),\overline{\bold{P}}(i),\overline{\bold{Q}}(i)$
,$\overline{\bold{R}}(i),\overline{\bold{S}}(i),\overline{\bold{T}}(i),\overline{\bold{U}}(i),(i=1,\dots,k)$.
Let $\tau$ be a command in $\Theta$ of the form (\ref{positive_rules}) ( command of the form (\ref{positive_rules}) are called \textit{positive } and their inverse \textit{negative}).
For every $\gamma \in \{4,9,\alpha,\omega\}$ and for each component $V(i)$ of the vector of sets of state letters, the letters 
$V(i,\tau,\gamma)$ are included into $V(i)$ where $V \in \{ P, Q,R,S,T,U,\overline{P},\overline{Q},\overline{R},\overline{S},\overline{T},\overline{U}\}$.
For each $S$-machine $S_{\gamma}, \gamma \in\{4,9,\alpha,\omega\}$ a copy of $S_{\gamma}$ is considered where every state letter $z$ is replaced
by $z(j,\tau,\gamma)$ where $j=i$ if $\gamma=4,9, j=0$ if $\gamma=\alpha$ and $j=k+1$ if $\gamma=\omega$. 
These state letters are included into the corresponding sets. The state letters we just described are all the state letters of $\mathcal{S}(M)$.
The rules of $\mathcal{S}(M)$ are the rules of $\mathcal{S}_4(\tau),\mathcal{S}_9(\tau),\mathcal{S}_{\gamma}(\tau),\mathcal{S}_{\omega}(\tau)$
for all $\tau \in \Theta$ of the form (\ref{positive_rules}) plus the connecting rules.
Basically the connecting rules allow to go from a machine to another one, there are five such rules:
$R_4(\tau),R_{4,\alpha}(\tau),R_{\alpha,\omega}(\tau),R_{\omega,9}(\tau),R_9(\tau)$.
They can be described informally as follows.
$R_4(\tau)$ turns on the machine $\mathcal{S}_4(\tau)$. $R_{4,\alpha}(\tau)$ turns on the machine $\mathcal{S}_{\alpha}(\tau)$ when 
$\mathcal{S}_4(\tau)$ finishes its work, $R_{\alpha,\omega}(\tau),R_{\omega,9}(\tau)$ do the same with the corresponding $S$-machines.
$R_9(\tau)$ turns off $\mathcal{S}_9(\tau)$ and gets the machine ready to simulate the next transition from $\Theta$.
This machinery contains all the necessary steps to simulate a rule of the machine $M$.

Formally speaking, to every configuration $c=(E_1 v_1F_{q_1},\dots,E_k v_k F_{q_k})$ of the machine $M$ is associated the following
admissible word $\sigma(c)$ of $\mathcal{S}(M)$:\\
$E(0) \alpha^n x(0) F(0)$\\
$E(1)v_1 x(1) F_{q_1}(1)E'(1)p(1)\delta^{||v_1||}q(1)r(1)s(1)t(1)u(1)$\\
$\overline{p}(1) \overline{q}(1) \overline{r}(1) \overline{s}(1)\overline{t}(1)\overline{u}(1)F'_{q_1}(1)\dots$\\
$E(k)v_k x(k) F_{q_k}(k)E'(k)p(k)\delta^{||v_k||}q(k)r(k)s(k)t(k)u(k) $\\
$\overline{p}(k) \overline{q}(k) \overline{r}(k) \overline{s}(k)\overline{t}(k)\overline{u}(k)F'_{q_k}(k)$\\
$E'(k+1)x'(k+1)\omega^n F'(k+1)$, where $||v||$ is the algebraic sum of the degree of the letters in $v$.

Now we present how \cite{Sap02} converts an $S$-machine $\mathcal{S}(M)$ into a group presentation, once again
this part is strongly modeled on \cite{Sap02}.
Let $\mathcal{S}(M)$ be the $S$-machine as constructed. Let $Y$ be the vector of sets of tape letters, and let $Q$ be the vector
of state letters of $S$. One can remark that $Q$ has $17k+6$ components which \cite{Sap02} denotes by $Q_1,\dots,Q_{17k+6}$. 
\cite{Sap02} notices that $Q_1=\textbf{E}(0),Q_2=\textbf{X}(0),Q_3=\textbf{F}(0), Q_{17k+4}=\textbf{E}'(k+1), Q_{17k+5}=\textbf{X}(k+1),Q_{17k+6}=\textbf{F}'(k+1)$.
Let $\mathbf\Theta_+$ the set of positive rules of $\mathcal{S}(M)$ and $N$ a positive integer. 
To construct their group $G_N(\mathcal{S})$ Sapir, Birget and Rips take the following generating sets :
\begin{equation}
	A=\bigcup\limits^{17k+6}_{i=1} Q_i \cup \{ \alpha,\omega,\delta \} \cup \bigcup\limits_{i=1}^{k} Y_i \cup \{ \kappa_j | j=1,\dots,2N \} \cup \mathbf\Theta_+.
\end{equation}
and the following set $P_N(\mathcal{S})$ of relations:
\begin{enumerate}
	\item \textit{Transitions relations}. These relations correspond to elements of $\mathbf\Theta_+$.
	Let $\tau \in \mathbf\Theta_+, \tau=[U_1 \to V_1,\dots,U_p \to V_p]$. Then relations $\tau^{-1}U_1\tau=V_1,\dots,\tau^{-1}U_p\tau=V_p$ are included 
	into $P_N(\mathcal{S})$. If for some $j$ from $1$ to $17k+6$ the letters from $Q_j$ do not appear in any of the $U_i$ then the relations
	$\tau^{-1} q_j \tau =q_j$ for every $q_j \in Q_j$ are also included.

	\item \textit{Auxiliary relations}. These are all possible relations of the form $\tau x= x \tau$ where 
	$x \in \{\alpha,\omega,\delta \} \cup \bigcup_{i=1}^k Y_i,\tau \in \mathbf\Theta_+$.

	\item \textit{The hub relation}. For every word $u$ let $K(u)$ denote the following word:
	\begin{center}
		$K(u) \equiv (u^{-1} \kappa_1 u \kappa_2 u^{-1} \kappa_3 u \kappa_4 \dots u^{-1} \kappa_{2N-1} u \kappa_{2N})\times $\\ 
		$(\kappa_{2N} u^{-1} \kappa_{2N-1}u \dots \kappa_2u^{-1}\kappa_1 u)^{-1}$.
	\end{center}
	Then the relation hub is $K(W_0)=1$, where $W_0$ is the accepting configuration of the $\mathcal{S}$-machine.
\end{enumerate}
The objective  of Sapir, Birget and Rips \cite{Sap02} in constructing such groups is to prove the following theorem:
\begin{Th} {\cite{Sap02}} {\label{supertheorem}}
Let $L \subseteq X^{+}$ be a language accepted by a Turing machine $M$ with a time function $T(n)$ for which $T(n)^4$ is superadditive.
Then there exists a finitely presented group $G(M)=\langle A \rangle$ with Dehn's function equivalent to $T(n)^4$, the smallest, isodiametric function
equivalent to $T^3(n)$, and there exists an injective map $K: X^+ \to (A \cup A^{-1})^+$ such that 
\begin{enumerate}
	\item $u \in L$ if and only if $K(u)=1$ in $G$;
	\item $K(u)$ has length $O(|u|)^2$ and is computable in time $O(|u|)$.
\end{enumerate}
\end{Th}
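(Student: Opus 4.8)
The plan is to set $G(M)$ equal to the group $G_N(\mathcal{S})$ presented above, for a sufficiently large fixed $N$, and to let $K$ be the encoding $u \mapsto K(W_0(u))$, where $W_0(u)$ is the admissible word $\sigma(c_u)$ associated with the initial configuration $c_u$ of $M$ reading $u$. Statement (2) and the injectivity of $K$ I would obtain by direct inspection of the encoding: the blocks $\alpha^n,\omega^n,\delta^{\|v_i\|}$ are controlled by $|u|$, so $\sigma(c_u)$ is linear in $|u|$ up to the quadratic padding built into the modified machine, whence $|K(u)| = O(|u|^2)$ and $K(u)$ is computable in time $O(|u|)$; distinct inputs produce distinct tape contents inside $\sigma(c_u)$, so $K$ is injective.

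For statement (1) I would argue in two directions. If $u \in L$ then $M$, and therefore the simulating $S$-machine $\mathcal{S}(M)$, has an accepting computation $c_u = c_0 \to c_1 \to \cdots \to c_t = W_0$. Reading this computation through the transition and auxiliary relations produces a van Kampen diagram --- a \emph{trapezium} whose top and bottom labels are $\sigma(c_0)$ and $\sigma(c_t)$ and whose side labels spell the applied rules --- which, capped off by a single hub cell $K(W_0)=1$, has boundary label $K(u)$; hence $K(u)=1$ in $G(M)$. Conversely, given any reduced diagram $\Delta$ with boundary $K(u)$, I would invoke the band machinery: maximal chains of cells sharing a fixed tape or state letter form \emph{bands}, the region cut out by a pair of parallel bands is a trapezium, and the decisive structural fact is that every \emph{hub-free} subdiagram is itself a trapezium and therefore encodes a genuine computation of $\mathcal{S}(M)$. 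Tracking the hubs of $\Delta$ together with the computations read off the intervening trapezia forces the existence of an accepting run of $\mathcal{S}(M)$ on $u$, so $u \in L$.

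The estimates on the Dehn and isodiametric functions are where the real work lies. For the upper bound I would take a minimal diagram for a null-homotopic word $w$ of length $n$, reduce it by cancelling mirror-image cell pairs and by eliminating the forbidden annuli formed by two parallel bands of the same type, and then decompose the result along its hubs into trapezia. Each trapezium of height $h$ and base $b$ has area $O(hb)$, while the number of hubs and the height and base of each trapezium are bounded in terms of the computation length $T(n)$ and of the tape size, itself $O(T(n))$; combining these and using that $T(n)^4$ is superadditive to make the per-disc bounds add up yields total area $\preceq T(n)^4$, with the diameter bounded by $T^3(n)$ by the analogous count of radial length rather than area. The matching lower bounds come from the diagrams built in the first direction of (1): for inputs forcing time $\sim T(n)$ the associated trapezia are genuinely of area $\sim T(n)^4$ and admit no cheaper filling, again by the band analysis. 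The main obstacle throughout is the converse of (1) together with the Dehn upper bound, namely controlling \emph{arbitrary} reduced diagrams: one must show that after surgery no parasitic, non-computational subdiagrams survive to inflate the area, and the whole band/annulus/trapezium apparatus exists precisely to rule these out. The superadditivity hypothesis on $T(n)^4$ is exactly what makes the area estimates additive under concatenation of computations, and hence legitimate to sum across the discs of a diagram.
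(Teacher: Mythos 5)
You should first be aware that the paper you were asked to match contains no proof of Theorem~\ref{supertheorem} at all: it is quoted verbatim from \cite{Sap02} (Sapir--Birget--Rips), and its proof is the content of that (very long) paper. So your sketch has to be judged against the actual argument of \cite{Sap02}. At the level of architecture it is faithful to it: $G(M)$ is indeed the group $G_N(\mathcal{S}(M))$, the encoding is $u \mapsto K(\sigma(u))$, an accepting run of $\mathcal{S}(M)$ does produce a diagram consisting of computation trapezia glued around a single hub cell (one minor slip: you need one trapezium per occurrence of $u^{\pm 1}$ in $K(u)$, i.e.\ $4N$ of them, not one), and the converse of~(1) together with the isoperimetric bounds is indeed obtained from the band/annulus/trapezium machinery.

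There are, however, two genuine gaps. The decisive one is your claim that ``every hub-free subdiagram is itself a trapezium and therefore encodes a genuine computation of $\mathcal{S}(M)$.'' This is false, and its falsity is precisely the subject of the paper you were inserted into: reduced hub-free diagrams over these presentations need \emph{not} correspond to computations of the machine --- that is the ``misbehaviour of hub-free diagrams'' in the title --- and even a genuine trapezium encodes a computation of the \emph{$S$-machine}, which acts on arbitrary reduced group words (including words with negative letters that no configuration of $M$ ever produces), not a computation of the Turing machine $M$. Bridging exactly these two gaps --- controlling non-computational diagrams, and forcing $S$-machine computations to track $M$-computations --- is why \cite{Sap02} interpose eleven auxiliary $S$-machines between $M$ and $\mathcal{S}(M)$ and why their proof occupies most of their paper; your sketch assumes both implications for free. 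The second gap is quantitative: if, as you write, heights and bases of the trapezia are $O(T(n))$, then your per-trapezium bound $O(hb)$ sums to $O(T(n)^2)$, not $T(n)^4$. The correct accounting in \cite{Sap02} is that the simulating machine $\mathcal{S}(M)$ has time function of order $T(n)^3$ (this inflated height is where the isodiametric bound $T^3$ comes from), with widths of order $T(n)$, so that the area of the standard trapezia is $T^3 \cdot T = T^4$; the simulation overhead, which your sketch nowhere registers, is the source of both exponents in the theorem.
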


\section{ Adding $\mathcal{S}$-machine and composition}
In this section we briefly explain how the composed $\mathcal{S}$-machine is constructed in \cite{Ols06}.
We begin with the definition of the adding $\mathcal{S}$-machine. This section is closely modeled following the construction given in \cite{Ols06}.
\subsection{ The adding $\mathcal{S}$-machine}
Let $A$ be a finite set of letters, $A_0,A_1$ be a copy of $A_0$. For every $a_0 \in A_0$ let $a_1$ denote its copy in $A_1$.
\cite{Ols06} defines the adding $\mathcal{S}$-machine $Z(A)$ as follows.
Its state of letters is $P_1 \cup P_2 \cup P_3$ where $P_1=\{L \}, P_2=\{p(1),p(2),p(3) \}, P_3=\{R \}$. The set of tape
letters is $Y_1 \cup Y_2$ where $Y_1=A_0 \cup A_1$ and $Y_2=A_0$.
The positive rules of $Z(A)$ are the following:
\begin{itemize}
	\item $r_1(a)=[L \to L, p(1) \to a^{-1}_1 p(1) a_0, R \to R ]$.
	\item $r_{12}(a)=[L \to L, p(1) \to a^{-1}_0 a_1 p(2), R \to R ]$.
	\item $r_2(a) = [L \to L,p(2) \to a_0 p(2) a^{-1}_0, R \to R]$.
	\item $r_{21}= [L \to L,p(2) \to p(1), R \to R], Y_1(r_{21})=Y_1, Y_2(r_{21})=\emptyset$.
	\item $r_{13}=[L \to L, p(1) \to p(3), R \to R], Y_1(r_{13})=\emptyset, Y_2(r_{13})=A_0$.
	\item $r_3(a)= [L \to L,p(3) \to a_0p(3)a^{-1}_0, R \to R], Y_1(r_3(a))=Y_2(r_3(a))=A_0$.
\end{itemize}

\subsection{Composition of $\mathcal{S}$-machines}
Let us recall how \cite{Ols06} defines the composition of an arbitrary $\mathcal{S}$-machine and the adding $\mathcal{S}$-machine.
Essentially \cite{Ols06} defines the composition $\mathcal{S} \circ Z$ of $\mathcal{S}$ and $Z(A)$ inserting a $p$-letter between any two consecutive
$k$-letters in admissible words of $\mathcal{S}$, and treating any subword $k_i \dots p \dots k_{i+1}$ as an admissible word for $Z(A)$.
The set of state letters is defined as $K_1 \cup P_1 \cup K_2 \cup P_2 \cup \dots \cup P_{N-1} \cup K_N$ where 
$P_i=\{ p_i,p_i(\theta,1),p_i(\theta,3) |\ \theta \in \Theta \},i=1, \dots,N-1$. 
The set of tape letters is $\overline{Y}=(Y_{1,0} \cup Y_{1,1}) \cup Y_{1,0} \cup (Y_{2,0} \cup Y_{2,1}) \cup Y_{2,0} \cup \dots \cup (Y_{N-1,0} \cup Y_{N-1,1}) \cup Y_{N-1,0}$
where $Y_{i,0}, Y_{i,1}$ are copies of $Y_i$. The components of this union are denoted by $\overline{Y}_1,\dots,\overline{Y}_{2N-2}$.
The set of positive rule $\overline{\Theta}$ of $\mathcal{S} \circ Z$ is defined as a union of the set of modified positive rule of $\mathcal{S}$ and 
$2(N-1)|\Theta|$ copies $Z_i(\theta)^+$ $(\theta \in \Theta, i=1,\dots,N)$ of positive rules of the machine $Z(Y_i)$.
The idea is to slow down the working of $\mathcal{S}$, that is in order to simulate a computation of $\mathcal{S}$ consisting of sequence of rules 
$\theta_1,\dots,\theta_s$, first all rules corresponding to $\theta_1$ are applied , then all rules corresponding to $\theta_2$, etc.

Formally every positive rules $\theta \in \Theta^+$ of the form 
\begin{center}
	$[k_1u_1 \to k'_1 u'_1, v_1k_2u_2 \to v'_1 k'_2u'_2, \dots, v_{N-1}k_N \to v'_{N-1}k'_N]$,
\end{center}
where $k_i,k'_i \in K_i$, $u_i$ and $v_i$ are words in $Y$, is replaced by 
\begin{center}
	$\overline{\theta}=[k_1u_1 \to k'_1 u'_1, v_1p_1 \to v'_1p_1(\theta,1), k_2 u_2 \to k'_2 u'_2,\dots,v_{N-1}p_{N-1} \to v'_{N-1} P_{N-1}(\theta,1),k_N\to k'_N]$
\end{center}

with $\overline{Y}_{2i-1}(\overline{\theta})=Y_{i,0}(\theta)$ and $Y_{2i}=\emptyset$ for every $i$.
Informally each modified rule from $\Theta$ turns on $N-1$ copies of the machine $Z(A)$. 

\cite{Ols06} defines each machine $Z_i(\theta)$ as a copy of the machine $Z(Y_i)$ where every rule 
$\tau=[U_1 \to V_1, U_2 \to V_2, U_3 \to V_3]$ is replaced by the rule of the form 
\begin{center}
	$\overline{\tau}_i(\theta)=[ \overline{U}_1 \to \overline{V}_1,\overline{U}_2 \to \overline{V}_2, \overline{U}_3 \to \overline{V}_3,
				   k'_j \to k'_j, p_j(\theta,3) \to p_j(\theta),3), j=1,\dots,i-1, p_s(\theta,1)\to p_s(\theta,1),k'_{s+1} \to k'_{s+1} ,
				   s=i+1,\dots,N-1]$
\end{center}
where $\overline{U}_1, \overline{U}_2, \overline{U}_3, \overline{V}_1,\overline{V}_2,\overline{V}_3$ are obtained from $U_1,U_2,U_3,V_1,V_2,V_3$
by replacing $p(j)$ with $p_i(\theta,j)$, $L$ with $k'_i$ and $R$ with $k'_{i+1}$, and for $s\neq i$, $\overline{Y}_{2s-1}(\overline{\tau}_i(\theta))=Y_{i,0}$.
To work the machine needs a rule that returns all $p$-letters to their original form, \cite{Ols06} called it the transition rule.
\begin{center}
	$[k'_i \to k'_i, p_j(\theta,3) \to p_j, i=1,\dots,N-1]$.
\end{center}

\section{Construction of the group $\mathcal{S}\circ \mathcal{Z}$}

Remember that our construction is not exactly the same as the one presented in \cite{Ols06}. 
Indeed \cite{Ols06} considers hub-free realization of $\mathcal{S}$-machines. 
Our objective is to study the relation 
between the topology of the metric space constructed from an $\mathcal{S}$-machine working in polynomial time. 
Thus we consider the whole construction of \cite{Sap02} , that is we do not delete the hub relation of the group's presentation.
We shall prove two important things. The first is that the method developed by \cite{Ols06} together with sufficient conditions allow to avoid the misbehaviour 
of the diagram without hubs of the considered group. The second thing we prove is that the method is not ideal for our needs since it changes the polynomial nature 
of the diagram containing the hub relation. The reader should remark that we do not define the whole
machinery of \cite{Ols06}, in particular the concept of diagram dispersion $\mathcal{E}(\Delta)$. 
Indeed the proof in \cite{Ols06} is based on some crucial lemma, and then 
one needs to prove a variant of the Lemma to obtain directly a variant of the result, roughly this is what we shall do in the following.
Let us recall the concept defined in \cite{Ols06} of $\mathcal{B}$-covered base, $\mathcal{B}$-tight base and $\mathcal{B}$-narrow base.
Every word in the alphabet $\{Q_1,\dots,Q_{2N-1}\}$ is called a \emph{base word}. Let $\mathcal{B}$ be a finite set of base words.

\begin{Def} A base word is $\mathcal{B}$-covered if 
	\begin{itemize} 
		\item it is covered by bases from $\mathcal{B}$ (i.e every letter belongs to a subword from $\mathcal{B}$);
		\item it starts and ends with the same $q$-letter $x$.
	\end{itemize}
\end{Def}

A base word is called $\mathcal{B}$-\emph{tight} if it has the form $uxvx$ where $xvx$ is a $\mathcal{B}$-covered word, $w$ does not contain any other 
$\mathcal{B}$-covered subwords. A base word is called $\mathcal{B}$-\emph{narrow} if it does not contain $\mathcal{B}$-covered subwords.
We now begin our construction, we first need a machine that looks like the adding machine previously defined. Then we will make assumptions 
about the working of $\mathcal{S} \circ \mathcal{Z}$. Even if the construction of such a machine is not trivial, 
one can note that the assumption is not a big one since \cite{DBLP:journals/ijac/Olshanskii07} already proved that such a machine exists.

Let $\mathcal{Z}(A)$ be an $\mathcal{S}$-machine constructed in the same fashion as the adding machine and satisfying the following Lemma:

\begin{Lem}\label{explem}
	Let $base(W)=LpR$. Then for every computation $W=W_0 \to W_1 \to \dots \to W_t=f.W$ of the $\mathcal{S}$-machine $\mathcal{Z}(A)$:
	\begin{enumerate}
		\item $|W_i| \leq max(|W|, |f.W|), i=0,\dots,t$,
		\item If $W=LupR$ where $p=p(1)$ (resp. $p=p(3))$,$f.W$ contains $p(3)R$ (resp. $p(1)R)$ and all $a$-letters in $W,f.W$ are from
		$A^{\pm}_0$ ,then the length $g(|u|)$ of $f$ is between $2^{|u|}$ and $6 \cdot 2^{|u|}$, if $u$ is a positive word, and all words in the computation have the same length.
		Vice versa, for every positive word $u$, such a computation exist.
	\end{enumerate}
\end{Lem}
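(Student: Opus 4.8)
The plan is to prove that the machine $\mathcal{Z}(A)$ realizes a binary counter, so that a full run of the counter from the all-zero value back to an all-$A_0$ configuration passes through $\Theta(2^{|u|})$ applications of the rules. First I would fix the counter interpretation dictated by the six rules: reading a letter from $A_0$ as the digit $0$ and a letter from $A_1$ as the digit $1$, the rule $r_1(a)$ carries a $1$ (rewriting it as a $0$ while the control letter $p(1)$ travels left), the rule $r_{12}(a)$ terminates a carry by turning the first available $0$ into a $1$ and switching the control to $p(2)$, the rules $r_2(a)$ and $r_{21}$ return the control to the right end and reset it to $p(1)$, and $r_{13}$ together with $r_3(a)$ handle overflow and the final sweep that produces $p(3)R$. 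For each rule I would record the exact rewriting performed on an admissible subword of the form $k_i\dots p\dots k_{i+1}$.

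Next I would establish part~(1). Each positive rule replaces the control letter by a word in which every inserted tape letter is immediately adjacent to its inverse on one side (for instance $p(1)\mapsto a_1^{-1}p(1)a_0$ cancels the $a_1$ standing to the left of $p(1)$), so after free reduction the rule merely converts one tape letter into another or moves the control letter past a single tape letter. Hence no rule increases the number of tape letters, which already gives $|W_i|\le\max(|W|,|f.W|)$; and under the positivity hypothesis of part~(2) (all $a$-letters from $A_0^{\pm}$ at the endpoints, so no cancellation of the wrong sign can occur) every rule is in fact length-neutral, which yields the stronger statement that all words in the computation share a common length.

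For the quantitative heart of part~(2) I would argue by induction on $n=|u|$, extracting from the rule analysis a recurrence for the number $g(n)$ of rules consumed by one complete counter cycle. A cycle over a word of length $n$ decomposes into the cycle over its length-$(n-1)$ prefix performed twice, separated by one carry into, and one reset of, the top digit, which gives $g(n)=2g(n-1)+c$ for a bounded constant $c$; solving with the base case then pins $g(n)$ between $2^{n}$ and $6\cdot 2^{n}$, the upper factor absorbing the six rule types that may each contribute once per increment under the amortised carry count $\sum_{j}(j+1)2^{n-1-j}=O(2^{n})$. The case $p=p(3)$ follows by reading the same computation backwards, since the inverse of each rule is again a rule of $\mathcal{Z}(A)$. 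Finally, the converse, namely that such a computation exists for every positive word $u$, is obtained by exhibiting the run explicitly: start the counter at $Lu\,p(1)R$ and apply the carry/return discipline above, which is always defined on a positive word and halts exactly when the overflow rule $r_{13}$ fires.

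The main obstacle I anticipate is the bookkeeping that preserves the positivity invariant, and the invariant that only $A_0$-letters occur at the endpoints, under free reduction throughout the $\Theta(2^{n})$ steps: one must check that the transient $A_1$-letters introduced by $r_1(a)$ and $r_{12}(a)$ are always cleared before the control returns, so that the length-neutrality argument of part~(1) is never voided by an unexpected cancellation, and that the applicability side-conditions $Y_1(\cdot),Y_2(\cdot)$ attached to $r_{21}$, $r_{13}$, $r_3(a)$ force the cycle to follow exactly the binary-increment order rather than some shortcut. Getting the constant $c$ in the recurrence sharp enough to land inside $[\,2^{n},\,6\cdot2^{n}\,]$, rather than merely within $\Theta(2^{n})$, is the second delicate point.
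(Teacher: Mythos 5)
You should first be aware that the paper itself contains no proof of this lemma: $\mathcal{Z}(A)$ is not the concrete adding machine of Section~3.1 but is \emph{postulated} to be a machine ``constructed in the same fashion as the adding machine'' satisfying the lemma, with existence delegated to the cited work of Olshanskii (and, ultimately, to the Olshanskii--Sapir analysis of the adding machine). So your attempt can only be measured against that original analysis, whose starting point is indeed the binary-counter reading of the six rules that you set up. Your treatment of the canonical run is essentially sound: such a run exists for every positive $u$, the applicability conditions $Y_2(r_{21})=\emptyset$, $Y_1(r_{13})=\emptyset$ do enforce the carry/return discipline along it, and the recurrence has the form $g(n)=2g(n-1)+c$ with $c$ constant (for the machine of Section~3.1 one gets $g(n)=4\cdot 2^{n}-3$, comfortably inside $[2^{n},6\cdot 2^{n}]$). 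This disposes of the ``vice versa'' direction.

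The genuine gap is that both numbered claims quantify over \emph{every} computation with base $LpR$, whereas your argument only treats computations that follow the intended discipline. Worse, your proof of part~(1) rests on a false premise about $S$-machine semantics: a rule such as $p(1)\to a_1^{-1}p(1)a_0$ is applicable to \emph{any} admissible word $Lup(1)vR$; nothing forces the inserted $a_1^{-1}$ to cancel. If $u$ does not end with $a_1$, the length grows by $2$ --- indeed the very configuration $Lup(1)R$ of part~(2), with $u$ positive over $A_0$, admits the length-increasing first steps $r_1(a)$ and $r_1(a)^{-1}$. So it is simply not true that ``no rule increases the number of tape letters,'' and length-neutrality is a property to be \emph{derived} for the computations in part~(2), not an a priori fact about the rules. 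What a correct proof must establish is (i) for part~(1), that in a reduced computation the length function has no strict interior maximum, which requires a case analysis of all pairs of consecutive non-inverse rules showing that an increase cannot be undone except by the inverse of the rule that produced it; and (ii) for part~(2), that any reduced computation joining $Lup(1)R$ to a word containing $p(3)R$ with all $a$-letters in $A_0^{\pm}$ is forced, rule by rule, to coincide with the canonical counter run --- otherwise ``the length $g(|u|)$ of $f$'' would not even be well defined, since distinct computations between the same endpoints could have different lengths. Your induction presupposes exactly this forcing (it is an induction on the structure of the intended cycle) rather than proving it, and that forcing argument is the real content of the lemma.
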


Moreover we assume that the composition $\mathcal{S} \circ \mathcal{Z}$, as defined previously, satisfies the two following Lemmas. 
\begin{Lem} \label{powerful-lemma} There exists a finite set of base words $\mathcal{B}$ such that 
\begin{itemize}
	\item the length of every $\mathcal{B}$-narrow base is smaller than a constant $K_0$,
	\item for every admissible word $W$ with base from $\mathcal{B}$ the width of every computation $W \to W_1 \to \dots \to W_t$ does not exceed
	$C(|W| + |W_t| + log_2 t / log_2 log_2 t)$ for some constant $C$.
\end{itemize}
\end{Lem}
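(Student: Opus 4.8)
The plan is to prove the two clauses separately: the first is a purely combinatorial statement about base words, while the second is the quantitative heart of the lemma and rests on the exponential behaviour of $\mathcal{Z}(A)$ recorded in Lemma~\ref{explem} together with the dispersion estimates of \cite{Ols06}.

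For the first clause I would exploit the finiteness of the base alphabet $\{Q_1,\dots,Q_{2N-1}\}$. By pigeonhole, any base word of length exceeding $2N-1$ repeats a state letter, and hence contains a subword $xvx$ beginning and ending with the same letter $x$. I would therefore let $\mathcal{B}$ consist of all such short words $xvx$ (of length at most a threshold $K_0$ depending only on $N$) that actually occur as bases of admissible subwords on which a single copy $Z_i(\theta)$ of the adding machine can act; this set is finite because both the alphabet and the rule set of $\mathcal{S}\circ\mathcal{Z}$ are finite. One then checks that every repeated-letter subword produced above is covered by members of $\mathcal{B}$ and begins and ends with the same letter, hence is $\mathcal{B}$-covered. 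Consequently a base that contains no $\mathcal{B}$-covered subword can repeat no state letter beyond the allowed window and must have length below $K_0$, which is the first clause.

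For the second clause I would induct on the $\mathcal{B}$-covered structure of $\mathrm{base}(W)$, using the tight/narrow decomposition recalled before the statement. A $\mathcal{B}$-narrow factor is short by the first clause, so the sub-machine acting on it changes lengths by at most a bounded amount at each step; its total contribution to the width is linear in the boundary data and is absorbed into the term $C(|W|+|W_t|)$. A $\mathcal{B}$-tight factor $uxvx$ exposes exactly one $\mathcal{B}$-covered layer $xvx$, on which---by construction of $\mathcal{Z}(A)$---a copy of the adding machine operates; here Lemma~\ref{explem} is decisive, since it tells us that such a counter runs for a time between $2^{|u|}$ and $6\cdot2^{|u|}$ while keeping every intermediate admissible word of the same length. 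Peeling off one covered layer at a time, each nested counter that is driven through a full cycle multiplies the running time of the layers outside it, so a configuration of large width can only be reached after a correspondingly long computation.

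The hard part will be converting this multiplicative accumulation into the sharp bound $\log_2 t/\log_2\log_2 t$, rather than the naive $\log_2 t$ that a single counter would give. The improvement comes from bounding the number of counters that can be simultaneously active inside one base: since each active layer costs at least two base letters and forces its enclosing layers to cycle many times, sustaining $m$ nested active counters of comparable size forces the total time to grow faster than any fixed exponential in the width---of factorial type $\log_2 t\gtrsim w\log_2 w$---so that inverting and using $\log_2 w\approx\log_2\log_2 t$ yields $w\lesssim\log_2 t/\log_2\log_2 t$. Making this last inequality rigorous is exactly the point at which one needs the dispersion function $\mathcal{E}(\Delta)$ of \cite{Ols06}, which the present paper deliberately leaves unstated; controlling the interaction between the modified rules $\overline{\theta}$ and the individual copies $Z_i(\theta)$ so that wide configurations cannot be produced ``for free'' is the most delicate step, and it is where I would expect the bulk of the technical work to lie.
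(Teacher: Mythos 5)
The first thing you should know is that the paper does not prove this statement at all: it is introduced by the sentence ``Moreover we assume that the composition $\mathcal{S} \circ \mathcal{Z}$, as defined previously, satisfies the two following Lemmas,'' and its plausibility is defended only by the remark that such a machine was already constructed in \cite{DBLP:journals/ijac/Olshanskii07}. The concluding section even stresses that ``the formal construction of a machine satisfying the Lemmas \ref{explem} and \ref{powerful-lemma} in a group considering the hub relation is a real challenge.'' So the statement functions in this paper as a hypothesis on the machine, not as a theorem with an internal proof; what you set out to reconstruct is the content of Olshanskii's external construction, which the author deliberately avoids reproducing.

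Judged on its own terms, your sketch has a genuine gap, and you partly admit it. For the first clause, taking $\mathcal{B}$ to be essentially all short words $xvx$ makes narrowness trivial by pigeonhole, but the two clauses pull against each other: the larger $\mathcal{B}$ is, the more admissible words $W$ fall under the second clause, and the width bound must then hold for all of them. Hence $\mathcal{B}$ cannot be chosen by pure combinatorics of the base alphabet $\{Q_1,\dots,Q_{2N-1}\}$; it has to be engineered together with the machine so that every computation over a base from $\mathcal{B}$ is forced to behave like nested adding-machine cycles. For the second clause, the multiplicative nesting heuristic you describe is indeed the right intuition behind $\log_2 t/\log_2\log_2 t$, but Lemma \ref{explem} alone cannot deliver it: that lemma controls computations whose base is exactly $LpR$ and says nothing about how the copies $Z_i(\theta)$ interact with the modified rules $\overline{\theta}$ inside a longer base, which is precisely where wide configurations could be produced cheaply. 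You yourself flag that making this rigorous requires the dispersion machinery of \cite{Ols06} and is where ``the bulk of the technical work'' lies---but that work \emph{is} the lemma; what you have is a plan whose decisive step is missing, which is consistent with the paper's own decision to assume the statement rather than prove it.
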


\begin{Lem} \label{lemmu}
Let $W_0 \to W_1 \to \dots \to W_t$ be a computation. Assume that there exists $R$ such that $Rg(g(n-1)) \leq t \leq g(g(n))$ for some integer $n$.
Then the area of the corresponding diagram without hubs $\Delta$ does not exceed $Ct(|W_0|_a + |W_t|_a)$ for a constant $C$ independent of the computation.
\end{Lem}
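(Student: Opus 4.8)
The plan is to view $\Delta$ as a trapezium and estimate its area band by band, and then to use the two-sided bound on $t$ to kill the logarithmic term coming from Lemma~\ref{powerful-lemma}. First I would use the standard decomposition of a hub-free diagram attached to a computation $W_0\to W_1\to\dots\to W_t$ into $t$ horizontal bands, the $i$-th band recording the single rule that rewrites $W_i$ as $W_{i+1}$; the number of cells in that band is proportional to the length of the admissible word it is applied to. Hence
\begin{equation}
\mathrm{Area}(\Delta)\ \le\ C_1\sum_{i=0}^{t}|W_i|\ \le\ C_1\,(t+1)\max_{0\le i\le t}|W_i|,
\end{equation}
and the whole statement reduces to the inequality $\max_i|W_i|\le C\,(|W_0|_a+|W_t|_a)$.

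Second I would pass to $a$-length and bring in Lemma~\ref{powerful-lemma}. Along a computation the base word is preserved, so every $W_i$ carries the same fixed number of state letters; thus $|W_i|=|W_i|_a+O(1)$ and it is enough to bound $\max_i|W_i|_a$. Splitting the base into its $\mathcal{B}$-covered factors and its $\mathcal{B}$-narrow factors, the first bullet of Lemma~\ref{powerful-lemma} bounds each narrow factor by $K_0$, so these contribute only a bounded additive error; on the $\mathcal{B}$-covered part the second bullet applies and gives
\begin{equation}
\max_{0\le i\le t}|W_i|\ \le\ C_2\Big(|W_0|+|W_t|+\frac{\log_2 t}{\log_2\log_2 t}\Big)\ \le\ C_3\Big(|W_0|_a+|W_t|_a+\frac{\log_2 t}{\log_2\log_2 t}\Big).
\end{equation}

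The heart of the argument is to absorb the term $\log_2 t/\log_2\log_2 t$ into $|W_0|_a+|W_t|_a$, and for this I would prove the much stronger estimate $|W_0|_a+|W_t|_a\ge c\,\log_2 t$. Here I use that $\mathcal{Z}(A)$ is an adding machine and that $\mathcal{S}\circ\mathcal{Z}$ is a composition of two adding machines: by Lemma~\ref{explem} an adding phase acting on a configuration with $m$ tape letters lasts $g(m)$ steps with $2^{m}\le g(m)\le 6\cdot 2^{m}$, while, by the first assertion of that lemma, the length of every intermediate word stays bounded by the maximum of the lengths of its two endpoints. Running the composition therefore stores its progress in a counter whose number of $a$-letters is $\asymp\log_2 t$, and the two-sided hypothesis $R\,g(g(n-1))\le t\le g(g(n))$ places $t$ deep enough in the doubly-exponential scale (because $R>1$) that this near-maximal width $\asymp 2^{n}\asymp\log_2 t$ is already realised at an end of the computation. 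Consequently $|W_0|_a+|W_t|_a\ge c\log_2 t\ge c\,\log_2 t/\log_2\log_2 t$, the logarithmic term is swallowed, and feeding this back into the previous two displays gives $\mathrm{Area}(\Delta)\le C\,t\,(|W_0|_a+|W_t|_a)$, as required.

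I expect the decisive difficulty to be exactly this width estimate \emph{at an endpoint}. Lemma~\ref{explem} only governs one adding phase, whereas in $\mathcal{S}\circ\mathcal{Z}$ the outer counter advances through very many inner subcomputations; one must show that its digit length grows monotonically and is already $\asymp\log_2 t$ at $W_0$ or $W_t$, ruling out long computations whose width peaks only transiently in the interior; this is precisely what the reinforced lower bound $t\ge R\,g(g(n-1))$ is designed to guarantee. The secondary technical point is the bookkeeping for the $\mathcal{B}$-narrow factors: one must check that, summed over all $t$ bands, they still cost only $O(t)$ and do not reintroduce a term comparable to $t\log_2 t$.
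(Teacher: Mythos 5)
First, a structural point: the paper contains no proof of Lemma~\ref{lemmu} for you to be compared against. It is introduced by the sentence ``Moreover we assume that the composition $\mathcal{S} \circ \mathcal{Z}$, as defined previously, satisfies the two following Lemmas,'' i.e.\ it is a standing hypothesis imposed on the machine, justified only by the remark that \cite{DBLP:journals/ijac/Olshanskii07} constructs a machine with properties of this kind; the paper's closing admission that building a machine satisfying these lemmas ``is a real challenge'' confirms that the author deliberately postulates rather than proves it. So your attempt must stand entirely on its own, and it does not.

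Your band decomposition and the invocation of Lemma~\ref{powerful-lemma} are reasonable, but the step you yourself flag as decisive --- the lower bound $|W_0|_a + |W_t|_a \geq c\log_2 t$ --- is a genuine gap, and not one that more care would close at this level of generality. Neither available assumption can deliver it: Lemma~\ref{powerful-lemma} bounds the width from \emph{above}, which is the wrong direction, and Lemma~\ref{explem} governs only computations of the inner machine $\mathcal{Z}(A)$ with base $LpR$ of a special form (from $p(1)$ to $p(3)$, all $a$-letters in $A_0^{\pm}$); it gives time estimates for those computations, not endpoint lower bounds for arbitrary computations of the composition, let alone for the non-standard bases that hub-free diagrams actually produce. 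Worse, the claim is generically false. In the composition, every application of a modified $\mathcal{S}$-rule $\overline{\theta}$ must be separated from the next one by a full inner adding phase of length about $2^{j}$ on a tape of current length $j$ (Lemma~\ref{explem}(2)) plus a transition rule. If the rules of $\mathcal{S}$ can change tape length --- and the intended $\mathcal{S}$ simulates a Turing machine, so they can --- then a reduced computation can pump a tape from length $\ell$ up to $L$ and back down to $\ell$ (descending via inverse transition rules and inverse adding phases), at total cost $t \approx \sum_{j\le L} 2^{j} \approx 2^{L+2}$, with both endpoints of length $O(\ell)$ independent of $L$, and with $n$ chosen so that $Rg(g(n-1)) \leq t \leq g(g(n))$. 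Here $|W_0|_a+|W_t|_a = O(\ell) \ll L \approx \log_2 t$, refuting your inequality, and the area is of order $\sum_j j\,2^{j} \approx t\log_2 t$, so even the conclusion of Lemma~\ref{lemmu} fails. The statement is therefore a fine property of one carefully constructed machine --- precisely the analysis carried out in \cite{DBLP:journals/ijac/Olshanskii07} and bypassed by assumption in this paper --- and cannot be derived, as you attempt, from the two assumed lemmas together with the arithmetic of $g$.
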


Let us define $G(\mathcal{S} \circ \mathcal{Z})$ as the group constructed following \cite{Sap02}.
We focus on its diagrams without hubs. We shall show that the area of such diagram does not exceed $Mn^2 log'n/log' log' n+ M log'n / log' log' n \mathcal{E}(\Delta)$,
where $log'(n)=max (log_2(n),1)$.
Looking at the proof in \cite{Ols06} of Lemma.6.2  that bounds the area of diagram by $n^2 log'n + M\mathcal{E}(\Delta)$ one can remark that it depends 
on the Lemma 4.4 (\cite{Ols06}). We shall show a variant of this Lemma based on our assumptions. The reader should remember that this Lemma is true only for
diagram without hubs.

\begin{Lem} Let $\Delta$ be a diagram without hubs of height $h \geq 1$ whose base is $\mathcal{B}$-tight and $\mathcal{B}$-covered. Then the area of 
$\Delta$ does not exceed $Ch(|W|_a + |W'|_a + log' n / log' log' n +1 )$, where $W,W'$ are the labels of its top and bottom, respectively, for some $C$.
\end{Lem}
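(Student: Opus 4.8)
The plan is to follow the proof of Lemma~4.4 in \cite{Ols06} step by step, substituting our width estimate (Lemma~\ref{powerful-lemma}) for the one used there; the only essential change is that the slowly growing term $\log_2 t/\log_2\log_2 t$ now plays the role of the logarithmic term in the original argument. First I would record that, since $\Delta$ contains no hubs and its base is $\mathcal{B}$-covered (so it begins and ends with the same $q$-letter $x$), the diagram is read as a computation $W=W_0\to W_1\to\dots\to W_h=W'$ of $\mathcal{S}\circ\mathcal{Z}$, where the height $h$ equals the number of $\theta$-bands crossing $\Delta$, i.e.\ the number of steps of the computation. Because the base word is fixed along the computation (only tape letters are rewritten), the $i$-th $\theta$-band meets one transition cell per $q$-letter and one auxiliary cell per tape letter, so that, up to a fixed constant, the area of $\Delta$ equals $\sum_{i=0}^{h-1}\bigl(|W_i|_a+|W_i|_q\bigr)$.

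The core estimate is to bound each summand. For the tape part I would invoke Lemma~\ref{powerful-lemma}, which applies precisely because the base of $\Delta$ lies in $\mathcal{B}$: it gives $|W_i|\le C(|W|+|W'|+\log_2 h/\log_2\log_2 h)$ for every $i$, and in particular $\max_i|W_i|_a$ is bounded by the same quantity. For the base part I would use that the base is $\mathcal{B}$-tight: its narrow factor has length below the constant $K_0$ of Lemma~\ref{powerful-lemma}, while its $\mathcal{B}$-covered factor $xvx$ is, for $\mathcal{Z}(A)$, a bounded concatenation of bases from the finite set $\mathcal{B}$; hence $|W_i|_q\le K_1$ for a constant $K_1$ depending only on $\mathcal{B}$. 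Writing $|W|=|W|_a+O(1)$ and $|W'|=|W'|_a+O(1)$ then converts the width bound into $\max_i|W_i|_a\le C'(|W|_a+|W'|_a+\log_2 h/\log_2\log_2 h+1)$.

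Summing over the $h$ rows gives
\[
\mathrm{Area}(\Delta)\le c_0\sum_{i=0}^{h-1}\bigl(|W_i|_a+|W_i|_q\bigr)\le h\Bigl(c_0C'(|W|_a+|W'|_a+\log_2 h/\log_2\log_2 h+1)+c_0K_1\Bigr),
\]
which, after collecting constants and writing $n$ for the computation length $h$ so that $\log' n/\log'\log' n$ is exactly the term supplied by Lemma~\ref{powerful-lemma}, is the asserted bound $Ch(|W|_a+|W'|_a+\log' n/\log'\log' n+1)$. The role of Lemma~\ref{explem} here is indirect but decisive: its second clause forces the configurations inside each adding phase to keep a constant length, and this is exactly what makes the width of the global computation grow no faster than $\log_2 t/\log_2\log_2 t$, i.e.\ it is the ingredient underlying Lemma~\ref{powerful-lemma}.

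I expect the main obstacle to be carrying out this adaptation without the machinery of diagram dispersion $\mathcal{E}(\Delta)$ used in \cite{Ols06}: one must check that the coarse inequality ``area $\le$ height $\times$ width'' does not throw away the $\log'/\log'\log'$ saving, that is, that applying Lemma~\ref{powerful-lemma} to the single global computation---rather than to each maximal adding subcomputation separately---already distributes the slowly growing term correctly over the $h$ rows. A secondary difficulty, and the place where the exponential length function of $\mathcal{Z}(A)$ really matters, is justifying the uniform bound $|W_i|_q\le K_1$ on the covered factor of a $\mathcal{B}$-tight base: a $\mathcal{B}$-covered word may a priori be long, so one has to use the specific sector structure of the composed adding machine and not merely the finiteness of $\mathcal{B}$.
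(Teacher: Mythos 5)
Your proposal coincides with the paper's proof, which consists precisely of the observation that the argument of Lemma 4.4 in \cite{Ols06} goes through verbatim once the invocation of Lemma 3.31 there is replaced by Lemma \ref{powerful-lemma}; this substitution of the width estimate is exactly the plan you state and execute. The extra detail you supply (area as a sum of row widths, the uniform bound on the $q$-part of a $\mathcal{B}$-tight base, the role of Lemma \ref{explem}) is a reasonable reconstruction of the cited argument and does not alter the approach.
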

\begin{proof}
	The proof works exactly as the proof of Lemma 4.4 (\cite{Ols06}) except that the invocation of Lemma 3.31 is replaced by our Lemma \ref{powerful-lemma}.
\end{proof}

Now going step by step through the proof of Lemma 6.2 (\cite{Ols06}) and using Lemma \ref{powerful-lemma} where it is needed one can obtain the following Lemma.
\begin{Lem} The area of a reduced diagram without hubs $\Delta$ does not exceed \\
$Mn^2 log'n/log' log'n + M log'n/ log' log'n \mathcal{E}(\Delta)$ where $n=|\Delta|$.
\end{Lem}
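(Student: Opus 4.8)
The plan is to follow the structure of the proof of Lemma 6.2 in \cite{Ols06} essentially verbatim, substituting our modified quantitative inputs wherever the original argument appeals to its own auxiliary lemmas. The key observation is that the statement we want differs from the original bound $n^2 \log' n + M\mathcal{E}(\Delta)$ only by the replacement of the factor $\log' n$ with $\log' n / \log' \log' n$ in both summands. Since this factor arises precisely from the width estimate on computations, which in our setting is controlled by Lemma \ref{powerful-lemma} (whose bound carries the term $\log_2 t / \log_2 \log_2 t$ rather than the cruder $\log_2 t$ used in the hub-free realization of \cite{Ols06}), the entire improvement propagates mechanically through the induction.

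First I would fix a reduced diagram $\Delta$ without hubs with $n = |\Delta|$, and set up the same decomposition of $\Delta$ used in \cite{Ols06}: one partitions the base of the diagram according to whether maximal subwords are $\mathcal{B}$-narrow or $\mathcal{B}$-covered, using the fact from Lemma \ref{powerful-lemma} that every $\mathcal{B}$-narrow base has length bounded by the constant $K_0$. The $\mathcal{B}$-narrow pieces contribute only a bounded number of cells per unit height and hence a term linear in $n$, while the $\mathcal{B}$-covered pieces are handled by passing to $\mathcal{B}$-tight subdiagrams. For each such subdiagram the immediately preceding Lemma (the variant of Lemma 4.4, whose proof replaces the invocation of Lemma 3.31 by Lemma \ref{powerful-lemma}) gives an area bound of the form $Ch(|W|_a + |W'|_a + \log' n / \log' \log' n + 1)$ in terms of its height $h$ and the $a$-lengths of its top and bottom labels.

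The heart of the argument is then the same summation and bookkeeping as in the original proof: one sums the per-subdiagram estimates over the hierarchical decomposition, uses the additivity of height along the vertical direction, and bounds $\sum h(|W|_a + |W'|_a)$ by $O(n^2)$ via the linear isoperimetric control of the $a$-coordinates together with $h \le n$. Each factor of $\log' n / \log' \log' n$ multiplies a term that is itself $O(n)$, yielding the leading $Mn^2 \log' n / \log' \log' n$, while the dispersion-dependent contribution is where the diagram dispersion $\mathcal{E}(\Delta)$ enters, producing the term $M \log' n / \log' \log' n \cdot \mathcal{E}(\Delta)$ exactly as the factor $\log' n$ did in \cite{Ols06}.

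The main obstacle I anticipate is not any single estimate but verifying that the refined width bound of Lemma \ref{powerful-lemma} is genuinely compatible at every point where the original proof of Lemma 6.2 used its own width lemma, in particular that the quantity $t$ appearing there (the number of computation steps, governed by the bounds $Rg(g(n-1)) \le t \le g(g(n))$ of Lemma \ref{lemmu}) feeds correctly into $\log_2 t / \log_2 \log_2 t$ so that this expression collapses to a bound in terms of $\log' n / \log' \log' n$ rather than $\log' n$. One must confirm that the doubly-exponential growth rate of $g$ from Lemma \ref{explem}, part (2), makes $\log \log t$ comparable to $\log \log n$ up to constants, so that the improvement is preserved rather than washed out; granting this, the remainder of the proof is the routine step-by-step transcription already indicated.
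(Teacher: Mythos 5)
Your proposal takes essentially the same approach as the paper: the paper's proof of this lemma consists precisely of retracing the proof of Lemma 6.2 of \cite{Ols06} step by step, substituting Lemma \ref{powerful-lemma} (and the preceding variant of Lemma 4.4) wherever the original width estimates are invoked, which is exactly the substitution you carry out. Your more detailed bookkeeping of how the factor $\log' n/\log'\log' n$ propagates through both the $n^2$ term and the $\mathcal{E}(\Delta)$ term is a faithful elaboration of that same argument, not a different route.
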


\cite{DBLP:journals/ijac/Olshanskii07} constructs an $\mathcal{S}$-machine that satisfies some of the above properties and then shows the following, here we modify slightly the hypothesis
of the Lemma in order to fit our context ( considering diagrams without hubs instead of any diagram).
\begin{Lem} \label{lemouf} Let the perimeter $n$ of a reduced diagram without hubs $\Delta$ satisfying $n \leq g(g(r))$ for some positive integer $r$. The area
of diagram $\Delta$ does not exceed $M(n^2 + m^2 log' m) + M\mathcal{E}(\Delta)$,\\
where $m=R(g(g(r-1))log'n$ and $R$ is the constant of Lemma \ref{lemmu}.
\end{Lem}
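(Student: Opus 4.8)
The plan is to transcribe, for reduced diagrams without hubs, the argument behind the cited lemma of \cite{DBLP:journals/ijac/Olshanskii07}, systematically replacing each hub-dependent input by the hub-free counterpart established above — principally our Lemma~\ref{lemmu} and Lemma~\ref{powerful-lemma}. The reason such a transcription is legitimate is that the decompositions used there split a diagram along maximal bands of its base and never create or destroy a hub cell; restricting the ambient class to hub-free diagrams therefore leaves every step of the argument meaningful, while the hypothesis $n\leq g(g(r))$ on the perimeter plays exactly the role the time bound plays in Lemma~\ref{lemmu}. I would organise the whole argument around the length $t$ of the computation $W_0\to W_1\to\dots\to W_t$ read off from $\Delta$, and split at the threshold $R\,g(g(r-1))$.

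First I would dispose of the long case. Since each of the two lateral sides of the computational part of $\Delta$ has length $t$, the perimeter hypothesis gives $t\leq n\leq g(g(r))$, so the upper inequality demanded by Lemma~\ref{lemmu} holds automatically. If in addition $t\geq R\,g(g(r-1))$, then $R\,g(g(r-1))\leq t\leq g(g(r))$ and Lemma~\ref{lemmu} (with its integer taken equal to $r$) bounds the entire area by $Ct(|W_0|_a+|W_t|_a)\leq Cn^2$, which is absorbed into the $Mn^2$ summand; in this case the remaining summands are not even needed.

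The substantive case is $t<R\,g(g(r-1))$. Here I would decompose $\Delta$ along its base into the maximal subdiagrams whose bases are $\mathcal{B}$-tight and $\mathcal{B}$-covered together with a $\mathcal{B}$-narrow remainder. By Lemma~\ref{powerful-lemma} the narrow part has base length below $K_0$ and contributes only $O(n)$ cells, while the same lemma controls the width of each remaining piece so that its perimeter is of order $R\,g(g(r-1))\, log' n = m$; applying the area estimates already proved to these finitely many pieces then yields the principal $m^2 log' m$ contribution. The hard part will be the dispersion term: a naive application of the preceding global estimate produces the coefficient $M\, log' n / log' log' n$ in front of $\mathcal{E}$, whereas the statement demands the constant coefficient $M$. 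Recovering this is precisely the point of Olshanskii's hierarchical accounting of the \emph{dispersion}, in which $\mathcal{E}$ is tracked additively across the level-$r$ decomposition, the number of maximal subcomputations at this level being itself controlled by $\mathcal{E}(\Delta)$, so that the logarithmic factor is paid once globally rather than once per piece. The step where I expect essentially all the work to concentrate is verifying that this additive accounting survives verbatim for hub-free diagrams — that is, that $\sum_j\mathcal{E}(\Delta_j)\leq\mathcal{E}(\Delta)$ for the pieces $\Delta_j$ arising from exactly this base-decomposition, with no dispersion hidden in cells that a hub relation would otherwise have absorbed. Once this subadditivity is in hand, summing the three contributions gives the asserted bound $M(n^2+m^2 log' m)+M\mathcal{E}(\Delta)$.
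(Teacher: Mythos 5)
Your proposal takes essentially the same route as the paper: the paper's entire proof of Lemma~\ref{lemouf} is the single sentence that one goes step by step through the proof of Lemma 5.2 of \cite{DBLP:journals/ijac/Olshanskii07}, invoking the paper's own lemmas (Lemmas~\ref{lemmu} and~\ref{powerful-lemma}) where needed, and your sketch is precisely a fleshed-out transcription of that substitution, including the case split at $t=R\,g(g(r-1))$ matching the hypothesis of Lemma~\ref{lemmu}. Your reconstruction of the internal structure (base decomposition, width control, additive dispersion accounting) goes beyond anything the paper actually verifies, but the strategy is identical, so no discrepancy arises.
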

\begin{proof} Going step by step through the proof of Lemma 5.2 ( \cite{DBLP:journals/ijac/Olshanskii07}) and using our Lemmas when it is needed proves the result.
\end{proof}

We shall now conclude our work. Remember that we want to show that for some asymptotic cone of the group $G(\mathcal{S} \circ \mathcal{Z})$ 
diagrams without hubs do not allow to conclude the non simply connectivity. For that we will show that in the group $G(\mathcal{S} \circ \mathcal{Z})$
the diagrams without hubs satisfy the following property noticed by Gromov in \cite{Gro93}:
For every $M > 1$ there exists $k$ such that for every constant $C\geq 1$, every loop $l$ in the Cayley complex of $G$, such that $\frac{1}{C}d_m\leq l \leq Cd_m$ for any
sufficiently large $m$, bounds a disc that can be subdivided into $k$ subdiscs with perimeters at most $\frac{l}{M}$.
\begin{Rem} It is sufficient to show the statement for some $M>1$. 
Indeed subdividing the subdiscs will allow one to conclude the statement for every $M>1$.

\end{Rem}

Next we will need the following Lemma from \cite{DBLP:journals/ijac/OlshanskiiS07}.
\begin{Lem} \cite{DBLP:journals/ijac/OlshanskiiS07} \label{Olslem}
	Let $\Delta$ be a triangular map whose perimeter $n$ is at least $200$. Assume that the area of $\Delta$ does not exceed $Mn^2$.
	Then there is a $k$ depending on $M$ only, such that $\Delta=\Gamma_1 \cup \dots \cup \Gamma_k$ where $\Gamma_i$ are submaps of $\Delta$, and
	$\Gamma_i \cap \Gamma_j$ is empty or a vertex, or a simple path, and perimeter $|\partial \Gamma_i|$ is at most $n/2$ for all $i=1,\dots,k$.
\end{Lem}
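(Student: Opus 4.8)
The plan is to read this as a combinatorial separator statement for triangulated discs and to build the decomposition by a repeated ``short cut'' procedure whose number of steps is governed by the area budget $Mn^2$. I would work with the combinatorial metric on the $1$-skeleton of $\Delta$ and with two elementary facts about triangular maps: (i) along any edge the distance to a fixed vertex changes by at most one, so the three vertices of each triangle take at most two consecutive distance values; and (ii) consequently, for a fixed vertex the distance shells $S_r = \partial B(x,r)$ can be realized as simple closed curves (or arcs ending on $\partial\Delta$) whose lengths, summed over all radii $r$, total $O(\mathrm{area})$, since each edge crosses at most one level and the number of edges is $O(\mathrm{area})$. Fact (ii) is the engine of the proof: it converts the quadratic area bound into the existence of short separating curves by pigeonhole.

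First I would establish a \emph{short cut} sublemma: in any triangular submap $\Gamma \subseteq \Delta$ of perimeter $p > n/2$, there is a simple arc or closed curve $\gamma$ of length at most $n/4$ that separates from $\Gamma$ a submap $\Gamma'$ with perimeter at most $n/2$ and area at least $cn^2$ for an absolute constant $c>0$. To produce $\gamma$ I would fix a vertex $x$ deep inside $\Gamma$, examine the shells $S_r$ for $r$ ranging over an interval of length of order $n$, and combine the total-length estimate from (ii) with pigeonhole to select a radius $r^\ast$ at which $S_{r^\ast}$ is short; the window of radii is chosen so that the enclosed ball already carries area $\ge cn^2$. The delicate point is that a single short shell need not itself have perimeter $\le n/2$ when $M$ is large, so in that regime the shell, being an annular region, must be further subdivided by short radial geodesics — and it is precisely this nested subdivision that forces the number of pieces to grow with $M$.

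With the short-cut sublemma in hand I would iterate: as long as some piece has perimeter exceeding $n/2$, apply the sublemma to peel off a submap of area $\ge cn^2$ and perimeter $\le n/2$. Because every such step consumes at least $cn^2$ of the total area $Mn^2$, the process halts after at most $M/c$ steps, and each step creates only a bounded number of new pieces; hence the total number $k$ of pieces is bounded by a function of $M$ alone, as claimed. The hypothesis $n \ge 200$ enters only to absorb additive constants, guaranteeing that the radius windows are nonempty and that the curves produced are genuinely shorter than $n/2$. Finally I would verify the intersection condition: cutting a planar triangular map along simple arcs and closed curves yields submaps meeting only along the cuts, so any two $\Gamma_i,\Gamma_j$ intersect in a subcomplex of a simple curve, i.e. in the empty set, a single vertex, or a simple path.

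The step I expect to be the main obstacle is the short-cut sublemma, specifically guaranteeing simultaneously that (a) the separating curve is short enough that cutting along it keeps both resulting perimeters below $n/2$, and (b) the peeled piece carries a definite proportion of the area so that the budget $Mn^2$ bounds the number of iterations. These two demands pull against each other when $M$ is large, and reconciling them is exactly where the quadratic area hypothesis, the shell-length estimate (ii), and the auxiliary radial subdivision of long shells must be combined with care. A secondary technical point, which I would track throughout the iteration, is the planarity bookkeeping: ensuring that successive cuts remain simple and that each piece stays a disc, so that the intersection property survives every step.
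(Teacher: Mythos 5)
First, a point of orientation: the paper you were asked to match contains no proof of this statement at all --- it is quoted verbatim from \cite{DBLP:journals/ijac/OlshanskiiS07}, where it is established by a Papasoglu-type argument. So your attempt must stand on its own, and it does not: the ``short-cut'' sublemma on which the whole iteration rests is false as stated. You require that every submap $\Gamma\subseteq\Delta$ of perimeter $p>n/2$ contain a piece $\Gamma'$ with $|\partial\Gamma'|\le n/2$ \emph{and} $\mathrm{area}(\Gamma')\ge cn^2$. But a triangular map of perimeter greater than $n/2$ can have area far below $cn^2$: a strip of triangles of unit width and length about $n/2$ has perimeter about $n$ and area $O(n)$, and it satisfies the hypothesis $\mathrm{area}\le Mn^2$. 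For such a $\Gamma$ no submap whatsoever has area $\ge cn^2$, so the sublemma cannot hold; and such a $\Gamma$ can be $\Delta$ itself, or --- unavoidably --- the leftover piece after your peeling process has spent its area budget. Note also that peeling never decreases the perimeter of the remainder (cutting along $\gamma$ can increase it by up to $2|\gamma|$), so in your scheme termination rests \emph{entirely} on the ``each step consumes $cn^2$ of area'' accounting; once the remainder is thin, the process stalls on a piece of perimeter $>n/2$ with no tool left to subdivide it. Thin pieces are in fact the easy case of the lemma (one cheap transversal cut essentially halves their perimeter), but handling them needs a second mechanism --- short cuts that reduce perimeter while consuming essentially no area --- and a termination measure that is not area-based; both are absent from your proposal.

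The second, independent gap is in the large-$M$ regime that you yourself flag as delicate. Pigeonhole over radii $r\le n/8$ only yields a shell $S_{r^\ast}$ of length of order $Mn$, and your proposed repair --- subdividing by radial geodesics --- is not carried through and meets a concrete obstruction: a ``sector'' of $B(x,r^\ast)$ bounded by two radial geodesics and a sub-arc of the shell can still contain arbitrarily long arcs of $\partial\Delta$, since the boundary of $\Delta$ may re-enter the ball many times; hence the sectors' perimeters are not at most $n/2$. Symmetrically, the piece $\Delta\setminus B(x,r^\ast)$, whose boundary now contains the whole shell, has perimeter possibly much larger than $n$, so recursing on it degrades the constants, while your targets ($n/2$, cut length $n/4$) are pinned to the original $n$; the recursion as set up does not close. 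Finally, even the selection of the radius window is unsound, because $B(x,n/8)$ need not carry area $\ge cn^2$ (the map can be thin near $x$). The known proofs (Papasoglu's, followed by \cite{DBLP:journals/ijac/OlshanskiiS07}) resolve exactly this tension between short cuts and fair division, which is why $k$ there grows with $M$ through iterated subdivision of curves of length $O(Mn)$; what they never do is charge a fixed quadratic amount of area per piece, and that charge is the step of your argument that cannot be repaired.
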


\cite{DBLP:journals/ijac/OlshanskiiS07} emphasizes that there exists a group such that its Dehn's function satisfies the following property:
\begin{center}
(P1): There are sequences of positive numbers $d_i \to \infty $ and $\lambda_i \to \infty $ such that $ f(n) \leq c n^{2}$ for an arbitrary integer  
$n \in \cup_{i=1}^{\infty} [\frac{d_i}{\lambda_i},\lambda_i d_i]$ and some constant $c$. 
\end{center}

Moreover \cite{DBLP:journals/ijac/OlshanskiiS07} shows that every group satisfying the condition (P1) has an asymptotic cones which is simply connected.
In the same fashion as the proof in \cite{DBLP:journals/ijac/Olshanskii07} we show the following Lemma:

\begin{Lem}\label{lemcool} 
There are sequences of positive numbers $d_i \to \infty$ , $\lambda_i \to \infty$ and a constant $c$ such that for every 
$n \in \cup_{i=1}^{\infty} [\frac{d_i}{\lambda_i},\lambda_i d_i]$ there exists a hub-free diagram $\Delta$ of perimeter $n$ from group $G(\mathcal{S} \circ \mathcal{Z})$
such that the area of $\Delta$ does not exceed $cn^2$.
\end{Lem}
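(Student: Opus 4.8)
The plan is to realise every $n$ in the window as the contour length of a trapezium attached to a reduced hub-free computation of $\mathcal{S}\circ\mathcal{Z}$, and to read off its area from Lemma \ref{lemmu}. First I would lay out the windows using the double-exponential growth of $r\mapsto g(g(r))$: since $g(m)\in[2^{m},6\cdot2^{m}]$ the ratio $g(g(r))/g(g(r-1))$ tends to infinity, so for each $i$ I may take $d_i=4\sqrt{R\,g(g(i-1))\,g(g(i))}$ and $\lambda_i=\bigl(g(g(i))/(R\,g(g(i-1)))\bigr)^{1/2}$, where $R$ is the constant of Lemma \ref{lemmu}. Then $d_i\to\infty$, $\lambda_i\to\infty$, and the window satisfies $[\tfrac{d_i}{\lambda_i},\lambda_i d_i]=[\,4R\,g(g(i-1)),\,4g(g(i))\,]$, which is exactly the band of perimeters that a computation whose length $t$ obeys $R\,g(g(i-1))\le t\le g(g(i))$ --- the hypothesis of Lemma \ref{lemmu} --- can be made to carry.

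Next, for $n$ in the $i$-th window I would construct the diagram. By the converse (``vice versa'') assertion of Lemma \ref{explem}, transported through the composition to $\mathcal{S}\circ\mathcal{Z}$, there is an input word of length $i$ whose full computation has length $g(g(i))\ge n/4$. Truncating that computation after $t$ steps gives a reduced computation $W_0\to\cdots\to W_t$, and lengthening the tape words by letters the machine never reads leaves it a computation while increasing $|W_0|+|W_t|$. The associated trapezium $\Delta$ has contour $|W_0|+|W_t|+2t$; varying $t$ by single steps and the inert padding by single letters realises the exact value $n$ while keeping $R\,g(g(i-1))\le t\le g(g(i))$. Its cells come only from the transition and auxiliary relations, so it has no hub cell and is hub-free, and a truncation of a reduced computation stays reduced.

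It then remains to estimate the area, which is where Lemma \ref{lemmu} enters. Because $t\in[R\,g(g(i-1)),\,g(g(i))]$ the lemma gives $\mathrm{Area}(\Delta)\le C\,t\,(|W_0|_a+|W_t|_a)$. The $a$-length of a word never exceeds its length, and $|W_0|+|W_t|=n-2t$, so $|W_0|_a+|W_t|_a\le n-2t$ and
\[
\mathrm{Area}(\Delta)\ \le\ C\,t\,(n-2t)\ \le\ C\,\frac{n^{2}}{8},
\]
the quadratic $t(n-2t)$ being maximised at $t=n/4$. Choosing the truncation length near $n/4$ --- admissible because the window was placed inside $[4R\,g(g(i-1)),4g(g(i))]$ --- yields $\mathrm{Area}(\Delta)\le cn^{2}$ with $c=C/8$, as required.

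The delicate point is not the area estimate, which is immediate once Lemma \ref{lemmu} applies, but the faithful realisation of the contour: I expect the main obstacle to be checking that the converse existence built into Lemma \ref{explem} survives the composition with the inserted adding copies, so that an input producing a computation of the prescribed length genuinely exists, and that truncation together with inert padding still yields a \emph{reduced} diagram whose boundary has the nominal length $n$ for every integer $n$ in the window (the only subtlety being a bounded parity correction, absorbed by allowing $|W_0|$ and $|W_t|$ to differ by one). Should one instead want the bound for \emph{every} reduced hub-free diagram of perimeter $n$ --- the form actually needed to rule out an obstruction in the asymptotic cone --- one would argue through Lemma \ref{lemouf} in place of Lemma \ref{lemmu}; the extra work there is to absorb the term $M\mathcal{E}(\Delta)$, which the present existential statement sidesteps by exhibiting a single low-dispersion computation trapezium.
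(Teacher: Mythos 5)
Your proposal and the paper's proof diverge exactly at the point you relegate to your closing remark. The paper's entire proof \emph{is} the route you mention only as an alternative: it fixes $n_i=g(g(i))$, sets $d_i=n_i^{3/4}$ and $\lambda_i=n_i^{\varepsilon}$ with $\varepsilon<\tfrac{1}{4}$, and applies Lemma \ref{lemouf} to an \emph{arbitrary} reduced hub-free diagram whose perimeter lies in the window, using only the two observations $g(g(r-1))^2\leq g(g(r))$ and $\mathcal{E}(\Delta)\leq O(|\partial\Delta|^2)$. With these choices every $n\in[n_i^{3/4-\varepsilon},\,n_i^{3/4+\varepsilon}]$ satisfies $n\leq g(g(i))$, so Lemma \ref{lemouf} applies with $r=i$; since $3/4-\varepsilon>1/2$, the quantity $m=Rg(g(i-1))\log' n\leq R\,n_i^{1/2}\log' n$ is bounded by $Rn^{\beta}\log' n$ for some $\beta<1$, so $M(n^2+m^2\log' m)+M\mathcal{E}(\Delta)=O(n^2)$. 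The paper therefore gets the quadratic bound for \emph{every} hub-free diagram with perimeter in the window, not just for one witness per $n$, and it never needs to construct a diagram at all.

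This difference is a genuine gap, not a stylistic one. Immediately after the lemma, the paper combines it with Lemma \ref{Olslem} to subdivide \emph{every} hub-free diagram of perimeter $n$ in the window (this is what the Gromov condition demands); an existential witness --- which is all your trapezium construction yields --- cannot feed that argument, since Lemma \ref{Olslem} must be applied to whatever loop one is handed. The existential phrasing of the statement is best read as a slip of wording: both the paper's proof and the downstream use are universal, so a proof matching only the literal statement does not do the job. Separately, even as a proof of the existential reading, your construction leaves nontrivial steps unverified: the ``vice versa'' clause of Lemma \ref{explem} is stated for $\mathcal{Z}(A)$ alone and must be transported through the composition $\mathcal{S}\circ\mathcal{Z}$; and the inert-padding step is delicate, because adding-machine rules such as $r_{13}$ and $r_{21}$ require empty sectors ($Y_1(r_{13})=\emptyset$, $Y_2(r_{21})=\emptyset$), so padded letters are not automatically ``never read'' --- one must pad only sectors whose alphabets are left unconstrained by every rule occurring in the chosen computation, which is precisely the kind of verification the paper's route avoids.
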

\begin{proof}
It is not difficult to see that $g(g(r-1))^2 \leq g(g(r))$. Moreover the dispersion $\mathcal{E}(\Delta)$ always satisfies $\mathcal{E}(\Delta)\leq O([\partial \Delta|^2)$.
Let $n_i=g(g(i))$. Set $d_i=(n_i)^{\frac{3}{4}}$ and $\lambda_i=n_i^\varepsilon$ with $\epsilon < \frac{1}{4}$. Then applying Lemma \ref{lemouf}
concludes the proof.
\end{proof}

Take $d_i=(n_i)^{\frac{3}{4}}$, from Lemma \ref{lemcool} and Lemma \ref{Olslem} there exists $M$ and $k$ such that for every $C$ and every hub-free diagram of perimeter 
$n, \frac{1}{C}d_m \leq n \leq C d_m, m $ sufficiently large, $\Delta$ can be subdivided into $k$ subdiagrams with perimeter at most $\frac{n}{2}$.
But since $k$ is fixed and depends on $M$ it easy to find the value satisfying the Gromov condition by further decreasing the length and increasing the number of subdiagrams.

\section{Consequences}
The method developed in \cite{Ols06} is really interesting in our context. Considering non-deterministic computations and their 
associated groups in the sense of \cite{Sap02} avoids the misbehaviour of diagrams without hubs ( again such diagrams are not corresponding to valid computations of the machine).
But there is a significant drawback when one tries to apply the method as we did. Indeed Theorem \ref{supertheorem} says that the Dehn's function of
the group constructed from a machine is equivalent to $T(n)^4$ where $T(n)$ is the time complexity of the machine.
One needs to understand how a machine satisfying our assumption is obtained. Informally the machine has an exponential time computation ( Lemma \ref{explem}) 
and a width in each computation that is bounded by some function $f(n)$. Looking at the construction in \cite{Sap02} the diagrams with the hubs relation  
correspond to computations of the $\mathcal{S}$-machine. Thus the following Lemma comes for free:
\begin{Lem} 
	The Dehn's function of $G(\mathcal{S} \circ \mathcal{Z})$ is exponential.
\end{Lem}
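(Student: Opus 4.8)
The plan is to obtain the exponential Dehn's function as an immediate consequence of the exponential running time of the composed machine, with Theorem~\ref{supertheorem} doing the heavy lifting. That theorem equates, up to the usual length-function equivalence, the Dehn's function of a group built from a machine with the fourth power of the machine's time function; since all single exponentials belong to one and the same equivalence class, it will suffice to show that $\mathcal{S} \circ \mathcal{Z}$ runs in exponential time and that the resulting $T(n)^4$ meets the superadditivity hypothesis.

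First I would determine the time function of $\mathcal{S} \circ \mathcal{Z}$. By Lemma~\ref{explem}, an accepting computation of $\mathcal{Z}(A)$ starting from a configuration $LupR$ with $u$ a positive word has length $g(|u|)$ with $2^{|u|} \leq g(|u|) \leq 6 \cdot 2^{|u|}$, so $g$ is exponential in the length of the processed tape word. In the composition defined above, each rule of $\mathcal{S}$ switches on $N-1$ copies of $\mathcal{Z}$, and the slowing-down mechanism forces every active copy to run to completion before the next rule of $\mathcal{S}$ can fire. Hence the number of steps of a computation of $\mathcal{S} \circ \mathcal{Z}$ is bounded below by $g$ on the relevant inputs, and the time function $T(n)$ of $\mathcal{S} \circ \mathcal{Z}$ is at least exponential.

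Next I would apply Theorem~\ref{supertheorem}. The superadditivity of $T^4$ is easy: for $T(n) \geq 2^{cn}$ and all sufficiently large $m,n$ one has $T(m+n)^4 \geq 2^{4cm}2^{4cn} \geq 2^{4cm} + 2^{4cn}$, the last inequality coming from $(a-1)(b-1) \geq 1$ with $a,b \geq 2$. The theorem then yields a two-sided estimate equating the Dehn's function of $G(\mathcal{S} \circ \mathcal{Z})$ with $T(n)^4$ up to equivalence, and $T(n)^4$ is exponential whenever $T(n)$ is. The lower half of this estimate is precisely the phenomenon highlighted in the preceding discussion: the hub-containing diagrams encode genuine computations, and a diagram attached to an accepting computation of exponential length has exponential area, so the boundary word it fills cannot be filled subexponentially.

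The step I expect to be the main obstacle is matching the exponential bound to the right scale. One must check that the embedded copies of $\mathcal{Z}$ are exercised on positive input words whose length grows linearly with the perimeter of the encoding diagram, so that the $2^{|u|}$ bound of Lemma~\ref{explem} turns into an exponential lower bound in the perimeter, and that the transition, auxiliary and hub relations introduce no shortcut that would fill these boundary words more cheaply. Once this correspondence between perimeter, input length and computation length is made precise, the superadditivity check and the appeal to Theorem~\ref{supertheorem} are routine, and the exponential bound for the Dehn's function follows, as the preceding discussion already anticipated.
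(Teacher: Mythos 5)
Your proposal is correct and follows essentially the same route as the paper: the paper also obtains this lemma directly from Theorem~\ref{supertheorem} (Dehn's function equivalent to $T(n)^4$) combined with the exponential running time of the composed machine given by Lemma~\ref{explem}, noting that hub-containing diagrams correspond to genuine computations. Your added details (the superadditivity check and the perimeter-versus-input-length scale matching) are refinements the paper leaves implicit, since it presents the lemma as coming ``for free'' from the preceding discussion.
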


Thus even if the topology of the asymptotic cones will not suffer from the misbehaviour of hub-free diagrams, it will not reflect the topology
of non-deterministic polynomial time computation since the considered computations now are exponential. Anyway it could be interesting to investigate further 
the application of this method and trying to construct a metric space that does not consider the exponential steps of the computation in the diagrams containing the 
hub relation. 
Moreover the formal construction of a machine satisfying the Lemmas \ref{explem} and \ref{powerful-lemma} in a group considering the hub relation is a real challenge.

\bibliography{biblio}
\bibliographystyle{eptcs}
\end{document}